\renewcommand\footnotetextcopyrightpermission[1]{} 
  \providecommand\BibTeX{{%
    \normalfont B\kern-0.5em{\scshape i\kern-0.25em b}\kern-0.8em\TeX}}}
\begin{document}

\title{Mirror Gradient: Towards Robust Multimodal Recommender Systems via Exploring Flat Local Minima}

\author{Shanshan Zhong$^*$}
\email{zhongshsh5@mail2.sysu.edu.cn}
\affiliation{%
  \institution{Sun Yat-sen University}
  \city{Guangzhou}
  \country{China}
}

\author{Zhongzhan Huang}
\authornote{Equal contribution. $^\dag$ Corresponding author. \quad Technical Report}
\email{huangzhzh23@mail2.sysu.edu.cn}
\affiliation{%
  \institution{Sun Yat-sen University}
  \city{Guangzhou}
  \country{China}
}

\author{Daifeng Li}
\email{lidaifeng@mail.sysu.edu.cn}
\affiliation{%
  \institution{Sun Yat-sen University}
  \city{Guangzhou}
  \country{China}
}

\author{Wushao Wen}
\email{wenwsh@mail.sysu.edu.cn}
\affiliation{%
  \institution{Sun Yat-sen University}
  \city{Guangzhou}
  \country{China}
}

\author{Jinghui Qin$^\dag$}
\email{scape1989@gmail.com}
\affiliation{%
  \institution{Guangdong University of Technology}
  \city{Guangzhou}
  \country{China}
}

\author{Liang Lin}
\email{linliang@ieee.org}
\affiliation{%
  \institution{Sun Yat-sen University}
  \city{Guangzhou}
  \country{China}
}

\renewcommand{\shortauthors}{Zhong S and Huang Z, et al.}

\begin{abstract}
Multimodal recommender systems utilize various types of information to model user preferences and item features, helping users discover items aligned with their interests. The integration of multimodal information mitigates the inherent challenges in recommender systems, e.g., the data sparsity problem and cold-start issues. However, it simultaneously magnifies certain risks from multimodal information inputs, such as information adjustment risk and inherent noise risk. These risks pose crucial challenges to the robustness of recommendation models. In this paper, we analyze multimodal recommender systems from the novel perspective of \textit{flat local minima} and propose a concise yet effective gradient strategy called Mirror Gradient (MG). This strategy can implicitly enhance the model's robustness during the optimization process, mitigating instability risks arising from multimodal information inputs. We also provide strong theoretical evidence and conduct extensive empirical experiments to show the superiority of MG across various multimodal recommendation models and benchmarks. Furthermore, we find that the proposed MG can complement existing robust training methods and be easily extended to diverse advanced recommendation models, making it a promising new and fundamental paradigm for training multimodal recommender systems. The code is released at \textcolor{magenta}{\url{https://github.com/Qrange-group/Mirror-Gradient}}.
\end{abstract}

\keywords{Recommender systems, Multimodal, Flat local minima, Robust}



\maketitle

\section{Introduction}
\label{sec:intro}

\begin{figure}[t]
  \centering
 \includegraphics[width=0.99\linewidth]{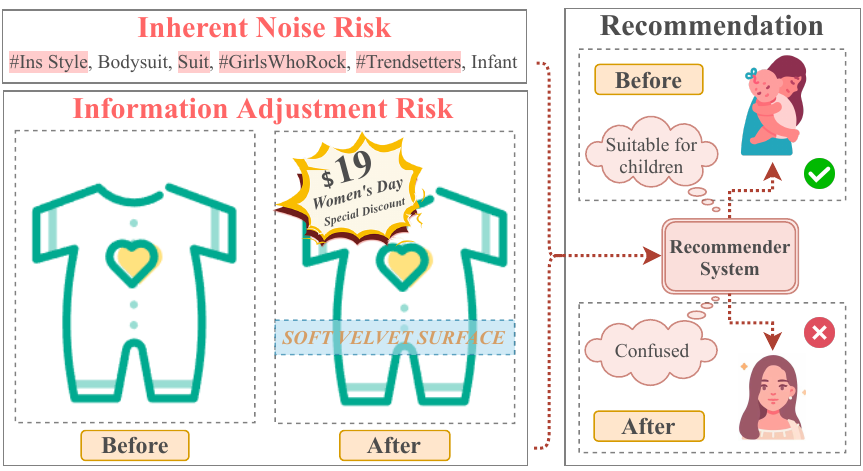}
  \caption{An illustrative example of multimodal risks. Merchants add popular tags (e.g., "ins style") and broad keywords (e.g., "suit") to the text of the bodysuit to increase the likelihood of the item being recommended. At the same time, merchants dynamically change the item's visual features in real-time due to Women's Day marketing campaigns and the emphasis on the superiority of the item's material. These actions make it difficult for the recommender system to accurately determine the target user for the current item, leading to incorrect recommendations for young girls.}
\vspace{-0.9mm}
\label{fig:case}
\end{figure}

\textbf{Relevance to the Web and to the track. }Recommender systems play a crucial role in helping users navigate the wealth of choices on the web and discover suitable items or online services. 
In fact, the integration of deep learning techniques~\cite{gao2022graph,chen2023bias,wang2015collaborative,zhang2016collaborative,huang2020dianet,zhong2022cem} into recommender systems has become widespread. These techniques leverage historical user-item interactions to model user preferences, thereby facilitating the personalized recommendation of items. In recent years, with the emergence of rich multimodal content information~\cite{zhong2023let,he2021blending,huang2023understanding} encompassing texts, images, and videos, multimodal recommender systems~\cite{salah2020cornac,zhou2023comprehensive} alleviate challenges~\cite{zhou2016evaluating} such as data sparsity and cold start. 
However, incorporating multimodal information into recommender systems also increases some inevitable risks about the input distribution shift.

The first risk is \textbf{inherent noise risk} which always appears in the training phase.
Some previous works~\cite{tang2019adversarial,wu2021fight} show that the performance of recommender systems encounters substantial challenges when confronted with input containing some noise in multimodal information. These noises are intrinsic, such as subpar image quality of items or the presence of a significant number of irrelevant or error information in items' features. These factors contribute to inherent noise introduced to the model's input, and the introduction of multimodal data in multimodal recommender systems makes mitigating inherent noise risk more challenging. 

Another risk is \textbf{information adjustment risk}.
After the recommender system has been trained based on multimodal data, it is well-known that multimodal data is prone to frequent adjustments. For example, merchants need to keep pace with trends or promotional activities to tailor keywords for items, and the descriptive images of items must be adjusted in line with iterative updates. This implies that in practical scenarios, the multimodal information within recommender systems often undergoes frequent modifications. A straightforward solution to address this risk is to update the model with the latest dataset, but as the volume of data increases, the cost of iterating the model significantly escalates, especially in multimodal recommender systems. In summary, these two risks pose a significant challenge~\cite{du2018enhancing,tang2019adversarial,li2020adversarial,wu2021fight} to multimodal recommender systems. 
Fig.~\ref{fig:case} is \textbf{an e-commerce case} to illustrate the effect of the risks. On one hand, during the training phase of recommender systems, inherent noise exists in multimodal features such as the text, which is intrinsic and has a potentially negative effect in the inference phase. On the other hand, during the inference phase of recommender systems, the multimodal feature as shown in Fig.~\ref{fig:case} of the bodysuit is edited due to promotional activities and the emphasis on the item's material. These two risks confuse the recommender system resulting in incorrect recommendations, which will be explicitly and quantitatively assessed in Section \ref{sec:ana}.

To mitigate the aforementioned risks of information adjustment and inherent noise, the necessity for building a more robust multimodal recommender system becomes apparent.
For enhancing the robustness, prior efforts~\cite{du2018enhancing, li2020adversarial} have chiefly employed adversarial training techniques. These methods improve the robustness of multimodal recommender systems by explicitly countering noise at input during the training phase. 
Different from them, in this paper, we first rethink the robustness of multimodal recommender systems from the loss landscape perspective by considering the \textit{flat local minima} of multimodal recommendation models. 
Then, we propose a concise yet effective training strategy named Mirror Gradient (MG) for implicitly improving system robustness.

\begin{figure}[t]
  \centering
 \includegraphics[width=\linewidth]{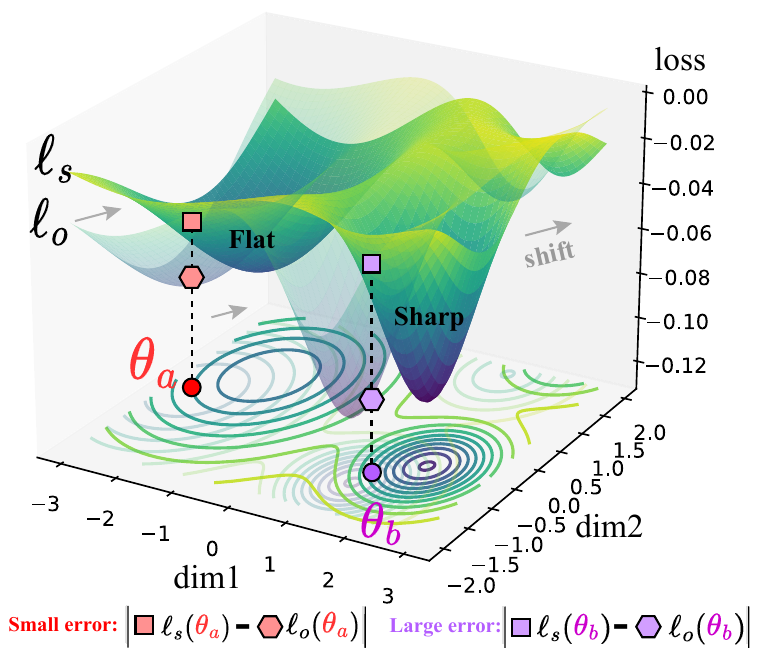}
  \caption{Illustration of flat local minima. When the distribution of the inputs shifts, for example, facing the risks of inherent noise and information adjustment, the loss landscape $\ell_o$ of the recommender system also shifts (to $\ell_s$) accordingly. The parameters $\theta_a$ located in flat local minima are more robust compared to $\theta_b$ in sharp local minima.}
\label{fig:intro}
\vspace{-0.8cm}
\end{figure}

Specifically, we can first present an intuitive insight into why a recommender system should consider the flat local minima, which are located in large weight space regions with very similar low loss values~\cite{kaddour2022flat}. In Fig.~\ref{fig:intro}, $\ell_o$ represents the original loss landscape, which is associated with the model's parameters, architecture, data distribution, etc. When the system's inputs face shifts in data distribution due to risks like information adjustment or inherent noise, $\ell_o$ (transparent surface) also shifts to $\ell_s$ (opaque surface). If the model's parameters are optimized to a sharp local minimum $\theta_b$, the error caused by this shift $|\ell_s(\theta_b)-\ell_o(\theta_b)|$ may be significantly larger than $|\ell_s(\theta_a)-\ell_o(\theta_a)|$ of flat local minima $\theta_a$. This indicates that the system is not robust while the parameters are in sharp local minima.
Therefore, we should strive to guide the learnable parameters of a multimodal recommender system towards flat local minima during training to enhance the model's robustness against potential risks of input distribution shifts.

To this end, we propose a concise gradient strategy MG that inverses the gradient signs appropriately during training to make the multimodal recommendation models approach flat local minima easier compared to models with normal training. Additionally, we conduct extensive experiments and analysis to validate the effectiveness of MG across various datasets and systems empirically. 
To elaborate our MG strategy, we first model it formally and then analyze how it improves the model's robustness by driving the parameters towards flat local minima implicitly from a theoretical perspective. The 
visualization method from \citet{li2018visualizing} supports our theoretical analysis and shows that MG indeed can help the model achieve flatter minima. Besides, we also empirically verify that MG is versatile, as it is compatible with most optimizers and other adversarial training-based robust recommendation methods.

In summary, our contributions are threefold:
\begin{itemize}
     \item We analyze the robustness of multimodal recommender systems from the perspective of flat local minima.
     \item From the perspective of flat local minima, we propose Mirror Gradient (MG), a concise yet effective gradient strategy that guides recommender system models toward flat local minima, enhancing model robustness. We also provide theoretical evidence for its effectiveness.
     \item Extensive experiments demonstrate the efficacy and versatility of MG. We also discuss the limitations of MG.
\end{itemize}

\section{Preliminary}
\label{sec:prel}

\noindent\textbf{Multimodal Recommender Systems. }
Considering a set of users $\mathcal{U}=\{u_1, u_2, ..., u_{|\mathcal{U}|}\}$, and a set of items $\mathcal{I}=\{i_1, i_2, ..., i_{|\mathcal{I}|}\}$, each user $u \in \mathcal{U}$ is associated with an item set $\mathcal{I}_u \subseteq \mathcal{I}$ about which $u$ has expressed explicit positive feedback. Besides, each item $i \in \mathcal{I}$ has multimodal information denoted by visual features $v_i \in \mathcal{V}$ and textual features $t_i \in \mathcal{T}$ in this paper. Then
given a multimodal recommendation model denoted as $\mathbf{R}(\cdot)$, 
\begin{equation}
\begin{aligned}
y_{u, i} = \mathbf{R} (u, i, v_i, t_i, \mathcal{I}_u~ | ~\Theta),
\end{aligned}
\label{eq:recsys}
\end{equation}
where $\Theta$ represents the model parameters of $\mathbf{R}(\cdot)$, and score $y_{u, i}$ signifies the preference of user $u$ towards item $i$. 
A higher score suggests that item $i$ is more suitable to be recommended to user $u$.

\noindent\textbf{Loss Function of Recommender Systems. }
Most works~\cite{he2016vbpr,zhou2023enhancing} optimize the model parameters $\Theta$ of multimodal recommender systems using Bayesian personalized ranking loss \cite{rendle2012bpr}. This optimization seeks to ensure that $y_{u, i}$, where $i \in \mathcal{I}_u$, is greater than $y_{u, i^{'}}$ when $i^{'} \not\in \mathcal{I}_u$, thus promoting positive interactions while discouraging negatives ones. Additionally, some recommender systems introduce supplementary losses \cite{tao2022self,zhou2023bootstrap} to enhance their performance. We adopt the unified notation $L_{\text{R}}(\cdot)$ to represent those losses.

\section{Methodology}
\label{sec:method}

In this section, we first elaborate on the algorithm of the proposed MG. Then, we introduce the theoretical insight of MG.

\subsection{Mirror Gradient}

MG is a concise and easily implementable approach that enhances the gradient of the model during the optimization process of recommender systems. This enhancement is equivalent to adding a regularization term to improve the system's robustness on input. The proposed MG consists of two phases in each training epoch: Normal Training and Mirror Training. 

During Normal Training, the conventional gradient descent is applied to the loss function $L_{\text{R}}(\cdot)$ with the current learnable parameters $\Theta_{t-1}$, as follows:
\begin{equation}
\Theta_{t} = \Theta_{t-1} - \eta\nabla_{\Theta}L_\text{R}(\Theta_{t-1}),
    \label{eq:normalsgd}
\end{equation}
where $\eta$ represents the learning rate. As shown in the Algorithm \ref{alg:mg}, we use an interval $\beta$ to control the effect of MG on each training epoch. After updating per $\beta-1$ iterations using Eq. (\ref{eq:normalsgd}), we employ the Mirror Training strategy to update the parameter $\Theta_{t-1}$ as follows:
\begin{equation}
\left\{
\begin{array}{ll}
\Theta_{t}^{'} = \Theta_{t-1} - \alpha_1\eta\nabla_{\Theta}L_\text{R}(\Theta_{t-1}),\\ 
\Theta_{t} = \Theta_{t}^{'} + \alpha_2\eta\nabla_{\Theta}L_\text{R}(\Theta_{t}^{'}).
\end{array}
\right.
\label{eq:mirrorsgd}
\end{equation}

Here, in order to control the relative size of updates introduced by mirror training, we introduce two positive scaling coefficients, $\alpha_1$ and $\alpha_2$, with $\alpha_1 > \alpha_2$.

Although the MG we proposed is highly simple, it possesses a strong theoretical insight and remarkable versatility. This enables consistent performance improvements across a wide array of experimental scenarios. Furthermore, in Section \ref{sec:ana}, we demonstrate the compatibility of MG with various optimizers and existing robust recommender system techniques. Moreover, it can achieve superior performance compared to some conventional optimization strategies about flat local minima.

\begin{algorithm}[t]  
    \caption{The training algorithm of Mirror Gradient}
    \raggedright 
    \textbf{Input:} The recommendation model $\mathbf{R}(\cdot)$; learning rate $\eta$; the number of iteration $T$; The scaling  coefficients $\alpha_1, \alpha_2 \in \mathbb{R}^+$ and  $\alpha_1> \alpha_2$. The interval $\beta \in \mathbb{N}^+$ of mirror training.
    
    \textbf{Output:} Model parameters $\Theta$.
        
    \begin{algorithmic}[1]
    \State{\textbf{for} $t$ from 1 to $T$ \textbf{do}}
    
    \State{\indent \textbf{if} $t$ mod $\beta$ == 0 \textbf{do}}
    \algorithmiccomment{Mirror Training} 
    \State{\indent \indent $\Theta_{t}^{'} = \Theta_{t-1} - \alpha_1\eta\nabla_{\Theta}L_\text{R}(\Theta_{t-1})$};
    \State{\indent \indent $\Theta_{t} = \Theta_{t}^{'} + \alpha_2\eta\nabla_{\Theta}L_\text{R}(\Theta_{t}^{'})$};
    
    \State{\indent \textbf{else} \textbf{do}}
    \algorithmiccomment{Normal Training} 
    \State{\indent \indent $\Theta_{t} = \Theta_{t-1} - \eta\nabla_{\Theta}L_\text{R}(\Theta_{t-1})$};
    \State {\indent \textbf{end if}}
    \State {\textbf{end for}}
    \State\Return $\Theta$
    \end{algorithmic} 
    \label{alg:mg}   
\end{algorithm}

\subsection{Theoretical Insight of MG }

In this part, we introduce Lemma \ref{lemma:1} and Theorem \ref{theo:1} which can help us analyze how MG helps the model's parameters tend towards flat local minima from a theoretical perspective, thereby enhancing the input robustness of the recommender system.

\begin{lemma}
\label{lemma:1} \cite{dherin2021geometric,huang2021rethinking} Consider a neural network $f(x)$ with $L$ layers and learnable parameters $\theta$. $h_i, 1\leq i \leq L$, denotes the feature map from $i$ th layer. For any scalar function $g$ of $h_L$, we have 
\begin{equation}
    \|\nabla_x g_\theta(x)\|_2^2 \cdot \sum\nolimits_{j=1}^L \mathcal{O}\big(\frac{1+\|h_i\|_2^2}{\|\nabla_x h_i\|_2^2} \big) \leq \|\nabla_\theta g_\theta(x)\|_2^2.
\end{equation}
  \end{lemma}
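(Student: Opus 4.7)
My plan is to prove this by a layer-wise decomposition of the parameter gradient and then invoke the chain rule to connect each layer's contribution to the input gradient. Write the parameter vector as $\theta=(\theta_1,\dots,\theta_L)$, where $\theta_j$ collects the trainable parameters of layer $j$ (for a standard layer $h_j=\sigma(W_j h_{j-1}+b_j)$, this is the pair $(W_j,b_j)$). Since the $\theta_j$ are disjoint slices of $\theta$, the total parameter-gradient norm splits as $\|\nabla_\theta g_\theta(x)\|_2^2=\sum_{j=1}^L\|\nabla_{\theta_j} g_\theta(x)\|_2^2$, so it suffices to produce a per-layer lower bound for each summand and then add them.

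The first key step is a local identity for $\|\nabla_{\theta_j} g_\theta(x)\|_2^2$. By the chain rule, $\nabla_{\theta_j} g=\nabla_{h_j} g \cdot \partial h_j/\partial \theta_j$. For the prototypical affine-plus-activation layer, the weight gradient has the outer-product form $\operatorname{diag}(\sigma')\,(\nabla_{h_j}g)\,h_{j-1}^{\top}$ and the bias gradient contributes an extra $\operatorname{diag}(\sigma')\,\nabla_{h_j}g$, so using Frobenius norm on the matrix slot of $\theta_j$ gives
\begin{equation}
\|\nabla_{\theta_j} g_\theta(x)\|_2^2 \;=\; \bigl\|\operatorname{diag}(\sigma')\nabla_{h_j}g\bigr\|_2^2\,\bigl(1+\|h_{j-1}\|_2^2\bigr),
\end{equation}
which is $\Omega\bigl(\|\nabla_{h_j} g\|_2^2\,(1+\|h_{j-1}\|_2^2)\bigr)$ under the standard assumption of bounded-below activation derivatives, hence the big-$\mathcal{O}$ in the statement. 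The second key step bounds $\|\nabla_{h_j} g\|_2$ from below using the chain rule in the opposite direction: since $\nabla_x g=\nabla_{h_j} g \cdot \nabla_x h_j$, submultiplicativity of the operator norm gives $\|\nabla_x g\|_2 \leq \|\nabla_{h_j} g\|_2 \cdot \|\nabla_x h_j\|_2$, so $\|\nabla_{h_j} g\|_2^2 \geq \|\nabla_x g\|_2^2/\|\nabla_x h_j\|_2^2$. Plugging this into the previous display and summing over $j$ yields exactly the claimed inequality (with the typographic indexing $h_i$ in the paper read as $h_{j-1}$ or $h_j$ inside the $\mathcal{O}$).

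The main obstacle will be norm bookkeeping: the Frobenius norm is the natural choice on the matrix-valued $\theta_j$, whereas the operator norm is what makes the chain-rule inequality tight for the Jacobians $\nabla_x h_j$, and mixing them costs factors that have to be absorbed into the $\mathcal{O}(\cdot)$. A second subtlety is that the identity for $\nabla_{\theta_j} g$ is only exact for the affine-plus-activation template; layers like normalization, attention, or residual blocks require a more careful calculation, but each contributes terms of the same schematic form $\|\nabla_{h_j}g\|_2^2\cdot(1+\|h_{j-1}\|_2^2)$ with layer-dependent constants that again fall inside the $\mathcal{O}$. Since the lemma is cited from \citet{dherin2021geometric,huang2021rethinking} and only the bound (not the constants) is needed for our downstream use, the two steps above give a self-contained verification for the canonical architecture and it is enough to remark that the general case follows by the same decomposition-plus-chain-rule template.
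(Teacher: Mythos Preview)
The paper does not actually prove Lemma~\ref{lemma:1}: it is stated with the citation \cite{dherin2021geometric,huang2021rethinking} and then used as a black box in the discussion following Theorem~\ref{theo:1}. So there is no ``paper's own proof'' to compare against; your proposal is a reconstruction of the argument from the cited references.

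As a reconstruction it is essentially correct. The two-step template (layerwise split $\|\nabla_\theta g\|_2^2=\sum_j\|\nabla_{\theta_j}g\|_2^2$, the outer-product identity for an affine-plus-activation layer giving the $(1+\|h_{j-1}\|_2^2)$ factor, and the chain-rule lower bound $\|\nabla_{h_j}g\|_2\ge\|\nabla_x g\|_2/\|\nabla_x h_j\|_2$) is exactly the argument in \cite{dherin2021geometric,huang2021rethinking}. Your caveats about norm bookkeeping and about non-affine layers are the right ones, and you are also right that the $\mathcal{O}(\cdot)$ in the lemma's statement is being used loosely to hide the activation-derivative lower bound and the operator-versus-Frobenius slack; a more careful statement would carry explicit constants or a $\Theta$. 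The only cosmetic point worth tightening is the sentence invoking ``submultiplicativity of the operator norm'': what you are really using is the bound $\|A^\top v\|_2\le\|A\|_{\mathrm{op}}\|v\|_2$ for the Jacobian $A=\nabla_x h_j$ acting on the vector $v=\nabla_{h_j}g$, not a product of two operator norms.
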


\begin{theorem}\label{theo:1}
Mirror Training in Eq. (\ref{eq:mirrorsgd}) is equal to introducing an implicit regularization term $\|\nabla_{\Theta} L_\text{R}(\Theta)\|_2^2$ to the original optimization objective $L_\text{R}(\Theta)$.
\end{theorem}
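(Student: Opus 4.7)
The plan is to recast the two-step mirror update as an ordinary gradient step with a first-order correction, and then identify that correction with the gradient of an extra regularizer. Write $g(\Theta)=\nabla_\Theta L_\text{R}(\Theta)$ and $H(\Theta)=\nabla_\Theta^2 L_\text{R}(\Theta)$. Eliminating $\Theta_t^{'}$ from Eq.~(\ref{eq:mirrorsgd}) gives the composite update
\begin{equation}
\Theta_t = \Theta_{t-1} - \alpha_1\eta\, g(\Theta_{t-1}) + \alpha_2\eta\, g\bigl(\Theta_{t-1}-\alpha_1\eta\, g(\Theta_{t-1})\bigr),
\end{equation}
so everything reduces to expanding the second gradient around $\Theta_{t-1}$.

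The main step is a first-order Taylor expansion of $g$. Assuming $L_\text{R}$ is twice differentiable with bounded Hessian (a standard smoothness hypothesis), I would write
\begin{equation}
g(\Theta_t^{'}) = g(\Theta_{t-1}) - \alpha_1\eta\, H(\Theta_{t-1})\, g(\Theta_{t-1}) + O(\eta^2),
\end{equation}
substitute back, and collect orders of $\eta$ to obtain
\begin{equation}
\Theta_t = \Theta_{t-1} - (\alpha_1-\alpha_2)\eta\, g(\Theta_{t-1}) - \alpha_1\alpha_2\eta^2\, H(\Theta_{t-1})\, g(\Theta_{t-1}) + O(\eta^3).
\end{equation}
The next step is to recognize the Hessian-gradient product as a gradient in disguise via the identity
\begin{equation}
\nabla_\Theta \bigl\|\nabla_\Theta L_\text{R}(\Theta)\bigr\|_2^2 = 2\, H(\Theta)\, g(\Theta),
\end{equation}
which follows from differentiating the squared norm and using the symmetry of the Hessian.

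With that identity in hand, I would compare the composite update against a plain gradient step on the surrogate objective $\widetilde{L}(\Theta)=L_\text{R}(\Theta)+\lambda\|\nabla_\Theta L_\text{R}(\Theta)\|_2^2$ taken with an effective learning rate $\eta_{\text{eff}}$. Matching the first-order coefficient gives $\eta_{\text{eff}}=(\alpha_1-\alpha_2)\eta$, and matching the second-order term forces
\begin{equation}
\lambda = \frac{\alpha_1\alpha_2\,\eta}{2(\alpha_1-\alpha_2)}.
\end{equation}
Because $\alpha_1>\alpha_2>0$ by assumption, $\lambda$ is strictly positive, so mirror training is equivalent (modulo $O(\eta^3)$ terms that a standard continuous-time or modified-equation argument absorbs) to gradient descent on $L_\text{R}$ augmented by the regularizer $\|\nabla_\Theta L_\text{R}\|_2^2$. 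Finally, I would connect back to the paper's story by invoking Lemma~\ref{lemma:1}: penalizing $\|\nabla_\Theta L_\text{R}\|_2^2$ upper-bounds a weighted input sensitivity $\|\nabla_x L_\text{R}\|_2^2$, which is the flat-minima/robustness conclusion the paper needs.

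The main obstacle is conceptual rather than computational: justifying the claim of exact equivalence from what is, strictly speaking, a second-order expansion. I would handle this either by stating the result as an $O(\eta^3)$ backward-error statement (standard in the implicit-regularization literature for multi-step gradient methods), or by passing to a continuous-time surrogate ODE where the equivalence is exact. Any additional care is only needed if one wants to control the remainder uniformly, which requires a Lipschitz bound on the Hessian and boundedness of $g$ along the optimization trajectory; these are mild assumptions on $L_\text{R}$ that are standard when discussing implicit regularization effects of discrete optimizers.
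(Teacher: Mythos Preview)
Your proposal is correct and follows essentially the same route as the paper: compose the two mirror steps, Taylor-expand the inner gradient to first order in $\eta$, and rewrite the resulting $H(\Theta)g(\Theta)$ term as $\tfrac{1}{2}\nabla_\Theta\|\nabla_\Theta L_\text{R}\|_2^2$ to read off the implicit regularizer. The only differences are cosmetic: the paper stops at the loose identification $L_M=(\alpha_1-\alpha_2)L_\text{R}+\alpha_1\alpha_2\eta\|\nabla_\Theta L_\text{R}\|_2^2$ without separating out an effective learning rate or tracking the $O(\eta^3)$ remainder, whereas you match coefficients to get $\eta_{\text{eff}}=(\alpha_1-\alpha_2)\eta$ and $\lambda=\alpha_1\alpha_2\eta/(2(\alpha_1-\alpha_2))$ and make the backward-error nature of the claim explicit; your discussion of the Lemma~\ref{lemma:1} link is also something the paper does \emph{after} the proof rather than inside it.
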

\begin{proof}
In Mirror Training, we have 
\begin{equation}
\begin{aligned}
\Theta_{t} &= \Theta_{t}^{'} + \alpha_2\eta\nabla_{\Theta}L_\text{R}(\Theta_{t}^{'}) \\
&= \Theta_{t-1} - \alpha_1\eta\nabla_{\Theta}L_\text{R}(\Theta_{t-1}) \\
&\quad + \alpha_2\eta\nabla_{\Theta}L_\text{R}(\Theta_{t-1} - \alpha_1\eta\nabla_{\Theta}L_\text{R}(\Theta_{t-1})).
\end{aligned}
\end{equation}
Next, since $\eta$ is small, we can use Taylor expansion for estimating $\nabla_{\Theta}L_\text{R}(\Theta_{t-1} - \alpha_1\eta\nabla_{\Theta}L_\text{R}(\Theta_{t-1}))$, and we have
\begin{equation}
\begin{aligned}
\Theta_{t} &\approx \Theta_{t-1} - \alpha_1\eta\nabla_{\Theta}L_\text{R}(\Theta_{t-1})  + \alpha_2\eta\nabla_{\Theta}L_\text{R}(\Theta_{t-1})\\
&\quad - \alpha_1\alpha_2\eta^2\nabla^2_{\Theta}L_\text{R}(\Theta_{t-1})^{\top} \nabla_{\Theta}L_\text{R}(\Theta_{t-1})\\
& = \Theta_{t-1} - (\alpha_1-\alpha_2)\eta\nabla_{\Theta}L_\text{R}(\Theta_{t-1})\\
& - \frac{1}{2}\cdot \alpha_1\alpha_2\eta^2 \nabla_{\Theta} \|\nabla_{\Theta}L_\text{R}(\Theta_{t-1})\|_2^2.
\end{aligned}
\label{eq:theta}
\end{equation}

Therefore, from Eq. (\ref{eq:theta}), we can find that the equivalent objective function for Mirror Training $L_M$ is 
\begin{equation}
    L_M = (\alpha_1-\alpha_2)\underbrace{L_\text{R}(\Theta)}_{\text{main term}} + \alpha_1\alpha_2\eta\cdot \underbrace{\|\nabla_{\Theta}L_\text{R}(\Theta_{t-1})\|_2^2}_{\text{regularization term}},
    \label{eq:loss_final}
\end{equation}
where $\alpha_1\alpha_2\eta>0$ and $\alpha_1-\alpha_2>0$.
	\qedhere
\end{proof}

The essence of MG, as revealed in Theorem \ref{theo:1}, lies in the addition of a regularization term concerning gradient magnitude to the original objective function $L_\text{R}(\Theta)$ implicitly. It's worth noting that the magnitude of gradients near flat local minima is quite small. And since $\alpha_1\alpha_2\eta > 0$, Eq. (\ref{eq:theta}) requires that the norm of gradient $\|\nabla_{\Theta}L_\text{R}(\Theta)\|_2^2$ should be sufficiently small, i.e., MG will lead the model's parameters towards flatter minima. Furthermore, from Lemma \ref{lemma:1} and Eq.(\ref{eq:loss_final}), let the scalar function is $L_\text{R}$ in recommender system, we have

\begin{equation}
\begin{aligned}
L_M 
&\geq \alpha_1\alpha_2\eta\cdot \|\nabla_{\Theta}L_\text{R}(\Theta_{t-1})\|_2^2\\
&\geq \alpha_1\alpha_2\eta\cdot \sum\nolimits_{j=1}^L \mathcal{O}\big(\frac{1+\|h_i\|_2^2}{\|\nabla_x h_i\|_2^2} \big) \cdot \underbrace{\|\nabla_x L_\text{R}\|_2^2}_{\text{robustness term}}.\\
\end{aligned}
\label{eq:t1}
\end{equation}



In general, $(1+\|h_i\|_2^2)/(\|\nabla_x h_i\|_2^2)$ is bounded and positive. 
Taking BM3~\cite{zhou2023bootstrap} on the Baby as an example, its value is bounded~\cite{huang2021rethinking,huang2024scalelong} and around 96.16 with the well-trained system. Therefore, from Eq. (\ref{eq:t1}) and Eq. (\ref{eq:loss_final}), we can infer that MG also aims to minimize the impact of inputs on the loss, $\|\nabla_x L_\text{R}\|_2^2$, while enhancing the model's robustness against input perturbations.

Furthermore, although Eq.~(\ref{eq:loss_final}) reveals that our proposed MG is equivalent to adding a regularization term $\|\nabla_{\Theta}L_\text{R}(\Theta_{t-1})\|_2^2$ during the optimization process of recommender systems, we do not recommend directly including this term in the loss function. On one hand, computing this term requires the prior calculation of the gradient $\nabla_{\Theta}L_\text{R}(\Theta_{t-1})$, implying additional computational overhead during inference and not easy to implement. On the other hand, this kind of explicit loss term is not conducive to optimization and generally results in relatively poor performance~\cite{foret2020sharpness}. In fact, our proposed MG is an implicit optimization of the additional regularization term shown in Eq.~(\ref{eq:loss_final}). It is also straightforward to implement. Hence, MG possesses greater practical applicability and potential.

\section{Experiments}
\label{sec:exp}


\subsection{Experimental Settings}
\noindent \textbf{Datasets. }The dataset statistics have been summarized in Table~\ref{tab:dataset}. We primarily employ four multimodal datasets from Amazon~\cite{mcauley2015image}, including Baby, Sports, Clothing, and Electronics. These datasets comprise textual and visual features in the form of item descriptions and images. Our data preprocessing methodology follows the approach outlined in \citet{zhou2023comprehensive}. Furthermore, we utilize the dataset Pinterest to assess the compatibility of Mirror Gradient with adversarial training methods in alignment with the official implementation of the adversarial training baseline AMR~\cite{tang2019adversarial}.

\begin{table}[hp]
  \centering
    \resizebox*{\linewidth}{!}{
    \begin{tabular}{lcccc}
    \toprule
    Dataset & \# Users & \# Items & \# Interactions & Sparsity \\
    \midrule
    Pinterest  & 3,226 & 4,998 & 9,844 & 99.94\% \\
    Baby  & 19,445 & 7,050 & 160,792 & 99.88\% \\
    Sports & 35,598 & 18,357 & 296,337 & 99.95\% \\
    Clothing & 39,387 & 23,033 & 237,488 & 99.97\% \\
    Electronics & 192,403 & 63,001 & 1,689,188 & 99.99\% \\
    \bottomrule
    \end{tabular}%
    }
  \caption{Statistics of datasets. These datasets comprise textual \& visual features in the form of item descriptions and images. }
  \label{tab:dataset}%
  \vspace{-0.8cm}
\end{table}%

\begin{table}[t]
  \centering
    \resizebox*{0.9\linewidth}{!}{
    \begin{tabular}{lcccc}
    \toprule
    Model & REC   & PREC  & MAP   & NDCG \\
    \midrule
    VBPR  & 0.0182 & 0.0042 & 0.0098 & 0.0122 \\
    VBPR + MG & 0.0203 & 0.0046 & 0.0110 & 0.0136 \\
    \rowcolor[rgb]{ .851,  .851,  .851} Improv. & 11.54\% & 9.52\% & 12.24\% & 11.48\% \\
    \midrule
    MMGCN & 0.0140 & 0.0033 & 0.0075 & 0.0094 \\
    MMGCN + MG & 0.0157 & 0.0036 & 0.0084 & 0.0106 \\
    \rowcolor[rgb]{ .851,  .851,  .851} Improv. & 12.14\% & 9.09\% & 12.00\% & 12.77\% \\
    \midrule
    GRCN  & 0.0226 & 0.0051 & 0.0126 & 0.0155 \\
    GRCN + MG & 0.0250 & 0.0057 & 0.0139 & 0.0171 \\
    \rowcolor[rgb]{ .851,  .851,  .851} Improv. & 10.62\% & 11.76\% & 10.32\% & 10.32\% \\
    \midrule
    DualGNN & 0.0238 & 0.0054 & 0.0132 & 0.0162 \\
    DualGNN + MG & 0.0249 & 0.0056 & 0.0139 & 0.0170 \\
    \rowcolor[rgb]{ .851,  .851,  .851} Improv. & 4.62\% & 3.70\% & 5.30\% & 4.94\% \\
    \midrule
    BM3   & 0.0280 & 0.0062 & 0.0157 & 0.0192 \\
    BM3 + MG & 0.0285 & 0.0063 & 0.0159 & 0.0195 \\
    \rowcolor[rgb]{ .851,  .851,  .851} Improv. & 1.79\% & 1.61\% & 1.27\% & 1.56\% \\
    \midrule
    FREEDOM & 0.0252 & 0.0056 & 0.0139 & 0.0171 \\
    FREEDOM + MG & 0.0260 & 0.0058 & 0.0144 & 0.0176 \\
    \rowcolor[rgb]{ .851,  .851,  .851} Improv. & 3.17\% & 3.57\% & 3.60\% & 2.92\% \\
    \midrule
    \rowcolor[rgb]{ .851,  .851,  .851} \textbf{Avg. Improv.} & \textbf{7.31\%} & \textbf{6.54\%} & \textbf{7.46\%} & \textbf{7.33\%} \\
    \bottomrule
    \end{tabular}%
    }
  \caption{Top-5 recommendation performance of baselines with or without MG on Electronics. "Improv." indicates the relative enhancement of MG compared to the baseline. "Avg. Improv." represents the average improvement. }
  \label{tab:elec}%
\end{table}%

\begin{table*}[h]
  \centering
    \resizebox*{\linewidth}{!}{
    \begin{tabular}{lcccccccccccc}
    \toprule
    \multirow{2}[4]{*}{Model} & \multicolumn{4}{c}{Baby}      & \multicolumn{4}{c}{Sports}    & \multicolumn{4}{c}{Clothing} \\
\cmidrule{2-13}          & REC   & PREC  & MAP   & NDCG  & REC   & PREC  & MAP   & NDCG  & REC   & PREC  & MAP   & NDCG \\
    \midrule
    VBPR  & 0.0265 & 0.0059 & 0.0134 & 0.0170 & 0.0353 & 0.0079 & 0.0189 & 0.0235 & 0.0186 & 0.0039 & 0.0103 & 0.0124 \\
    VBPR + MG & 0.0273 & 0.0061 & 0.0149 & 0.0184 & 0.0375 & 0.0084 & 0.0203 & 0.0251 & 0.0230 & 0.0048 & 0.0129 & 0.0155 \\
    \rowcolor[rgb]{ .851,  .851,  .851} Improv. & 3.02\% & 3.39\% & 11.19\% & 8.24\% & 6.23\% & 6.33\% & 7.41\% & 6.81\% & 23.66\% & 23.08\% & 25.24\% & 25.00\% \\
    \midrule
    MMGCN & 0.0240 & 0.0053 & 0.0130 & 0.0160 & 0.0216 & 0.0049 & 0.0114 & 0.0143 & 0.0130 & 0.0028 & 0.0073 & 0.0088 \\
    MMGCN + MG & 0.0269 & 0.0060 & 0.0139 & 0.0175 & 0.0241 & 0.0054 & 0.0126 & 0.0158 & 0.0153 & 0.0032 & 0.0081 & 0.0100 \\
    \rowcolor[rgb]{ .851,  .851,  .851} Improv. & 12.08\% & 13.21\% & 6.92\% & 9.38\% & 11.57\% & 10.20\% & 10.53\% & 10.49\% & 17.69\% & 14.29\% & 10.96\% & 13.64\% \\
    \midrule
    GRCN  & 0.0336 & 0.0074 & 0.0182 & 0.0225 & 0.0360 & 0.0080 & 0.0196 & 0.0241 & 0.0269 & 0.0056 & 0.0140 & 0.0173 \\
    GRCN + MG & 0.0354 & 0.0078 & 0.0186 & 0.0232 & 0.0383 & 0.0086 & 0.0207 & 0.0256 & 0.0276 & 0.0058 & 0.0146 & 0.0179 \\
    \rowcolor[rgb]{ .851,  .851,  .851} Improv. & 5.36\% & 5.41\% & 2.20\% & 3.11\% & 6.39\% & 7.50\% & 5.61\% & 6.22\% & 2.60\% & 3.57\% & 4.29\% & 3.47\% \\
    \midrule
    DualGNN & 0.0322 & 0.0071 & 0.0175 & 0.0216 & 0.0374 & 0.0084 & 0.0206 & 0.0253 & 0.0277 & 0.0058 & 0.0153 & 0.0185 \\
    DualGNN + MG & 0.0329 & 0.0073 & 0.0176 & 0.0219 & 0.0387 & 0.0086 & 0.0212 & 0.0261 & 0.0305 & 0.0063 & 0.0165 & 0.0201 \\
    \rowcolor[rgb]{ .851,  .851,  .851} Improv. & 2.17\% & 2.82\% & 0.57\% & 1.39\% & 3.48\% & 2.38\% & 2.91\% & 3.16\% & 10.11\% & 8.62\% & 7.84\% & 8.65\% \\
    \midrule
    SLMRec & 0.0343 & 0.0075 & 0.0182 & 0.0226 & 0.0429 & 0.0095 & 0.0233 & 0.0288 & 0.0292 & 0.0061 & 0.0163 & 0.0196 \\
    SLMRec + MG & 0.0381 & 0.0085 & 0.0204 & 0.0253 & 0.0449 & 0.0099 & 0.0242 & 0.0299 & 0.0323 & 0.0067 & 0.0175 & 0.0213 \\
    \rowcolor[rgb]{ .851,  .851,  .851} Improv. & 11.08\% & 13.33\% & 12.09\% & 11.95\% & 4.66\% & 4.21\% & 3.86\% & 3.82\% & 10.62\% & 9.84\% & 7.36\% & 8.67\% \\
    \midrule
    BM3   & 0.0327 & 0.0072 & 0.0174 & 0.0216 & 0.0353 & 0.0078 & 0.0194 & 0.0238 & 0.0246 & 0.0051 & 0.0135 & 0.0164 \\
    BM3 + MG & 0.0345 & 0.0077 & 0.0183 & 0.0228 & 0.0386 & 0.0086 & 0.0210 & 0.0259 & 0.0259 & 0.0054 & 0.0145 & 0.0174 \\
    \rowcolor[rgb]{ .851,  .851,  .851} Improv. & 5.50\% & 6.94\% & 5.17\% & 5.56\% & 9.35\% & 10.26\% & 8.25\% & 8.82\% & 5.28\% & 5.88\% & 7.41\% & 6.10\% \\
    \midrule
    FREEDOM & 0.0374 & 0.0083 & 0.0194 & 0.0243 & 0.0446 & 0.0098 & 0.0232 & 0.0291 & 0.0388 & 0.0080 & 0.0211 & 0.0257 \\
    FREEDOM + MG & 0.0397 & 0.0088 & 0.0209 & 0.0261 & 0.0466 & 0.0102 & 0.0242 & 0.0303 & 0.0405 & 0.0084 & 0.0223 & 0.0270 \\
    \rowcolor[rgb]{ .851,  .851,  .851} Improv. & 6.15\% & 6.02\% & 7.73\% & 7.41\% & 4.48\% & 4.08\% & 4.31\% & 4.12\% & 4.38\% & 5.00\% & 5.69\% & 5.06\% \\
    \midrule
    DRAGON & 0.0374 & 0.0082 & 0.0202 & 0.0249 & 0.0449 & 0.0098 & 0.0239 & 0.0296 & 0.0401 & 0.0083 & 0.0225 & 0.0270 \\
    DRAGON + MG & 0.0419 & 0.0092 & 0.0219 & 0.0273 & 0.0465 & 0.0102 & 0.0248 & 0.0307 & 0.0437 & 0.0091 & 0.0239 & 0.0290 \\
    \rowcolor[rgb]{ .851,  .851,  .851} Improv. & 12.03\% & 12.20\% & 8.42\% & 9.64\% & 3.56\% & 4.08\% & 3.77\% & 3.72\% & 8.98\% & 9.64\% & 6.22\% & 7.41\% \\
    \midrule
    \rowcolor[rgb]{ .851,  .851,  .851} \textbf{Avg. Improv. $\dagger$} & \textbf{7.17\%} & \textbf{7.91\%} & \textbf{6.79\%} & \textbf{7.08\%} & \textbf{6.22\%} & \textbf{6.13\%} & \textbf{5.83\%} & \textbf{5.90\%} & \textbf{10.41\%} & \textbf{9.99\%} & \textbf{9.38\%} & \textbf{9.75\%} \\
    \bottomrule
    \end{tabular}%
    }
  \caption{Top-5 recommendation performance of baselines with or without MG on Baby, Sport, and Clothing. "Improv." indicates the relative enhancement of MG compared to the baseline. "Avg. Improv." represents the average improvement across each dataset. $\dagger$ The Top-5 performance improvement achieved by MG is substantial and significant. See the appendix for further discussion.}
  \label{tab:bsc}%
\end{table*}%

\noindent \textbf{Metrics. }For the evaluation of recommendation performance, we pay attention to top-5 accuracy as recommendations in the top positions of rank lists are more important~\cite{tang2018ranking}, and adopt four widely used metrics~\cite{wu2023consrec,su2023enhancing,he2023dynamically,zhou2023comprehensive} including recall (REC), precision (PREC), mean average precision (MAP), and normalized discounted cumulative gain (NDCG). These four evaluation metrics are chosen because they each focus on different crucial aspects. REC measures whether the system captures a user's potential areas of interest, PREC gauges the accuracy of recommendations, MAP focuses on the average accuracy of rankings, and NDCG emphasizes the quality of rankings. These metrics complement each other and collectively provide a comprehensive evaluation, aiding in a holistic understanding of the recommender system's performance. Additionally, when comparing against the adversarial training method AMR~\cite{tang2019adversarial}, we apply the hits ratio (HR) in alignment with the evaluation approach used in the original AMR paper. The choice of HR serves the purpose of maintaining consistency and comparability with established benchmarks, allowing for a direct comparison of our results with those reported in the paper. All the above-cited metrics range from 0 to 1, the closer to 1 the better.

\noindent \textbf{Baselines. }We extensively examine MG's performance across a variety of multimodal recommendation models, encompassing matrix factorization (VBPR \cite{he2016vbpr}), graph neural networks (MMGCN \cite{wei2019mmgcn}, GRCN \cite{wei2020graph}, DualGNN \cite{wang2021dualgnn}, FREEDOM \cite{zhou2022tale}, DRAGON \cite{zhou2023enhancing}), self-supervised learning (SLMRec \cite{tao2022self}, BM3 \cite{zhou2023bootstrap}), as well as non-multimodal models(LayerGCN \cite{zhou2023layer}, SelfCF \cite{zhou2023selfcf}). We utilize AMR \cite{tang2019adversarial} as our foundational adversarial training method, given its widespread popularity in the field. Additionally, we compare our MG with flat local minima approach SSAM \cite{mi2022make}, as it represents a leading-edge, versatile solution for addressing flat local minima.

\noindent \textbf{Implementation Details. }We retain the standard settings for all baselines. Following the settings of some current works in multimodal recommender systems~\cite{zhou2023enhancing,zhou2023bootstrap,zhou2022tale}, we perform a grid search on hyperparameters $\alpha_1$ and $\alpha_2$, and set $\beta$ to 3. Unless otherwise specified, Adam \cite{kingma2014adam} serves as the chosen optimizer. The training and evaluation of all models is conducted using the RTX3090 GPU.

\subsection{Overall Performance}

\noindent \textbf{Observation \#1: MG can enhance the performance of diverse multimodal recommender systems consistently. }
As shown in Table~\ref{tab:bsc}, we conduct an extensive evaluation of MG across eight baseline models using three distinct datasets. Our experimental findings unequivocally illustrate that it is difficult to overlook the enhancements in the performance of multimodal recommender systems with MG across all evaluation metrics. Remarkably, the most substantial improvement is observed in the case of VBPR training on the dataset Clothing, resulting in an impressive performance enhancement of over 20\%. To summarize, MG we proposed excels at mitigating inherent noise risks, enabling multimodal recommender systems to capture user preferences within more favorable flat local minima. This capability contributes to the success of our method.

\noindent \textbf{Observation \#2: MG exhibits remarkable efficacy on more challenging datasets. }Furthermore, as indicated in Table~\ref{tab:elec}, we extend the evaluation of our proposed MG to the dataset Electronics, which presents a more demanding scenario with a cumulative user and item count surpassing the sum of other datasets presented in Table~\ref{tab:bsc}. Remarkably, the increased challenge posed by this dataset does not diminish MG's performance. On Electronics, the average improvement achieved by MG remains consistent with the figures presented in Table~\ref{tab:bsc}.

\begin{figure*}[t]
  \centering
  \includegraphics[width=0.99\linewidth]{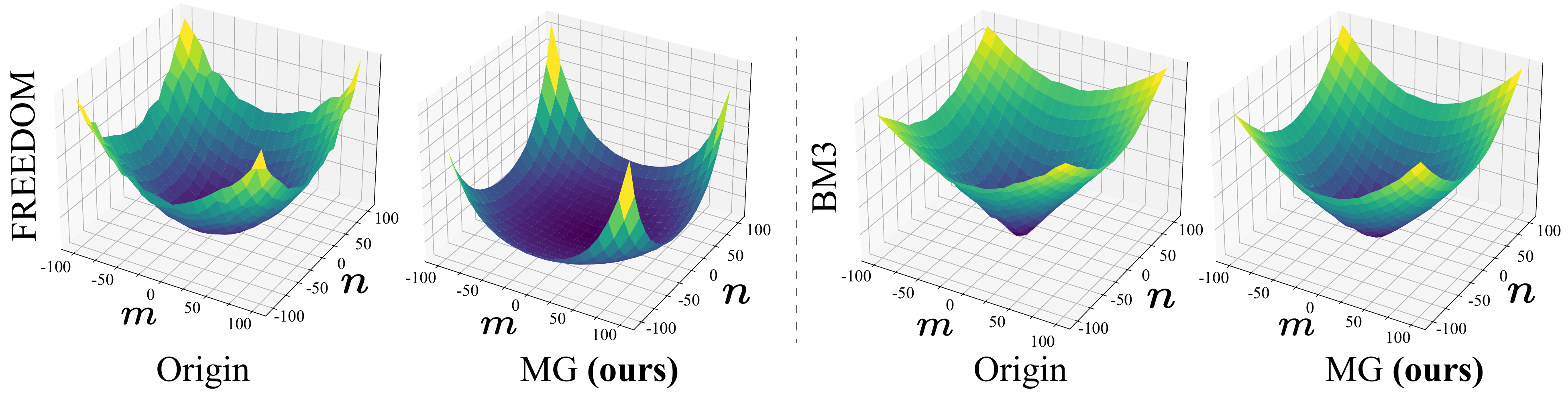}
  \caption{Visualization of local minima. Training loss landscapes of FREEDOM and BM3 on Baby trained with or without MG.}
\label{fig:landscape}
\end{figure*}

\begin{table}[t]
  \centering
    \resizebox*{0.75\linewidth}{!}{
    \begin{tabular}{lccc}
    \toprule
    Model & Org   & Noise & Decr. $\downarrow$ \\
    \midrule
    BM3   & 0.0327 & 0.0272 & 16.92\% \\
    BM3 + MG & 0.0345 & 0.0314 & \textbf{8.99\%} \\
    \midrule
    FREEDOM & 0.0374 & 0.0357 & 4.60\% \\
    FREEDOM + MG & 0.0397 & 0.0390 & \textbf{1.76\%} \\
    \bottomrule
    \end{tabular}%
    }
  \caption{REC of models under information noise on Baby. The term "Noise" refers to the injection with noise $\epsilon \sim \mathcal{N}(0,10^{-6})$ into the embedding layers of the model. The experimental results of the noise are calculated as the mean value obtained from conducting the noise experiment 10 times repetitively. "Decr." denotes the relative decrease compared to the origin result (Org) after injecting noise. }
  \label{tab:noise}%
\end{table}%

\section{Analysis}
\label{sec:ana}

Our analysis is designed to answer the following research questions (RQs), where RQ1-5 focus on the performance of MG, while RQ6-8 examine its versatility:

\begin{itemize}[leftmargin=0.3cm]
\item RQ1: How does MG perform in mitigating inherent noise risk?
\item RQ2: How does MG perform in mitigating information adjustment risk?
\item RQ3: Does MG enable multimodal recommender systems to approach flatter local minima?
\item RQ4: Does MG increase model training costs?
\item RQ5: Is MG superior to flat local minima methods?
\item RQ6: Can MG be compatible with various optimizers?
\item RQ7: Can MG be compatible with robust recommender systems?
\item RQ8: Can MG enhance the performance of non-multimodal recommender systems?
\end{itemize}

\noindent\textbf{Further Evaluation for MG's Robustness (RQ1 \& RQ2). } 
In this part, we directly validate the robustness of the proposed MG by explicitly simulating the two risks mentioned in Section \ref{sec:intro} in the inference phase.

\noindent(1) Information noise. For models with learnable item embedding layers, e.g., BM3 and FREEDOM, we inject Gaussian noise $\epsilon \sim \mathcal{N}(0,10^{-6})$ ~\cite{,zhong2023asr,huang2022layer,liang2020instance} into their embedding layers. This step aims to simulate the presence of information noise in multimodal recommender systems. The results are presented in Table~\ref{tab:noise}.

\begin{table}[t]
  \centering
    \resizebox*{0.8\linewidth}{!}{
    \begin{tabular}{lccc}
    \toprule
    Model & Org   & Adjustment & Decr. $\downarrow$ \\
    \midrule
    MMGCN & 0.0240 & 0.0223 & 7.08\% \\
    MMGCN + MG & 0.0269 & 0.0260 & \textbf{3.35\%} \\
    \midrule
    GRCN  & 0.0336 & 0.0321 & 4.46\% \\
    GRCN + MG & 0.0354 & 0.0346 & \textbf{2.26\%} \\
    \bottomrule
    \end{tabular}%
    }
  \caption{REC of models under information adjustment on Baby. The term "Adjustment" refers to the action of inserting image captions generated using BLIP-2~\cite{li2023blip} into the text of items with a 1\% chance. This insertion process simulates changes in multimodal information within recommender systems. "Decr." denotes the relative decrease compared to the origin result (Org) after information adjustment. }
  \label{tab:variation}%
\end{table}%

\noindent(2) Information adjustment. To simulate the adjustment in multimodal information within a recommender system, we insert image captions generated using BLIP-2~\cite{li2023blip,zhong2023adapter} into the text of the dataset with a one percent chance. As models with learnable item embedding layers do not rely on the original multimodal information during the inference phase, we select MMGCN and GRCN, which directly employ neural networks to forward multimodal input, as the baselines. The results are outlined in Table~\ref{tab:variation}.

The results of experiments indicate that multimodal recommendation models trained with MG exhibit greater robustness in handling multimodal noise and adjustment.

\begin{table*}[htp]
  \centering
    \resizebox*{0.9\linewidth}{!}{
    \begin{tabular}{l|lcccc|lcccc}
    \toprule
    Optimizer & Model & REC   & PREC  & MAP   & NDCG  & Model & REC   & PREC  & MAP   & NDCG \\
    \midrule
    \multirow{3}[2]{*}{Adam} & GRCN  & 0.0336 & 0.0074 & 0.0182 & 0.0225 & DRAGON & 0.0374 & 0.0082 & 0.0202 & 0.0249 \\
          & GRCN + MG & 0.0354 & 0.0078 & 0.0186 & 0.0232 & DRAGON + MG & 0.0419 & 0.0092 & 0.0219 & 0.0273 \\
          & \cellcolor[rgb]{ .851,  .851,  .851}Improv. & \cellcolor[rgb]{ .851,  .851,  .851}5.36\% & \cellcolor[rgb]{ .851,  .851,  .851}5.41\% & \cellcolor[rgb]{ .851,  .851,  .851}2.20\% & \cellcolor[rgb]{ .851,  .851,  .851}3.11\% & \cellcolor[rgb]{ .851,  .851,  .851}Improv. & \cellcolor[rgb]{ .851,  .851,  .851}12.03\% & \cellcolor[rgb]{ .851,  .851,  .851}12.20\% & \cellcolor[rgb]{ .851,  .851,  .851}8.42\% & \cellcolor[rgb]{ .851,  .851,  .851}9.64\% \\
    \midrule
    \multirow{3}[2]{*}{SGD} & GRCN  & 0.0013 & 0.0003 & 0.0005 & 0.0007 & DRAGON & 0.0182 & 0.0040 & 0.0093 & 0.0118 \\
          & GRCN + MG & 0.0014 & 0.0003 & 0.0006 & 0.0008 & DRAGON + MG & 0.0188 & 0.0041 & 0.0097 & 0.0122 \\
          & \cellcolor[rgb]{ .851,  .851,  .851}Improv. & \cellcolor[rgb]{ .851,  .851,  .851}7.69\% & \cellcolor[rgb]{ .851,  .851,  .851}0.00\% & \cellcolor[rgb]{ .851,  .851,  .851}20.00\% & \cellcolor[rgb]{ .851,  .851,  .851}14.29\% & \cellcolor[rgb]{ .851,  .851,  .851}Improv. & \cellcolor[rgb]{ .851,  .851,  .851}3.30\% & \cellcolor[rgb]{ .851,  .851,  .851}2.50\% & \cellcolor[rgb]{ .851,  .851,  .851}4.30\% & \cellcolor[rgb]{ .851,  .851,  .851}3.39\% \\
    \midrule
    \multirow{3}[2]{*}{RMSprop} & GRCN  & 0.0338 & 0.0074 & 0.0184 & 0.0227 & DRAGON & 0.0367 & 0.0081 & 0.0198 & 0.0245 \\
          & GRCN + MG & 0.0345 & 0.0076 & 0.0187 & 0.0231 & DRAGON + MG & 0.0391 & 0.0087 & 0.0201 & 0.0253 \\
          & \cellcolor[rgb]{ .851,  .851,  .851}Improv. & \cellcolor[rgb]{ .851,  .851,  .851}2.03\% & \cellcolor[rgb]{ .851,  .851,  .851}2.63\% & \cellcolor[rgb]{ .851,  .851,  .851}1.60\% & \cellcolor[rgb]{ .851,  .851,  .851}1.73\% & \cellcolor[rgb]{ .851,  .851,  .851}Improv. & \cellcolor[rgb]{ .851,  .851,  .851}6.54\% & \cellcolor[rgb]{ .851,  .851,  .851}7.41\% & \cellcolor[rgb]{ .851,  .851,  .851}1.52\% & \cellcolor[rgb]{ .851,  .851,  .851}3.27\% \\
    \midrule
    \multirow{3}[2]{*}{Adagrad} & GRCN  & 0.0283 & 0.0063 & 0.0152 & 0.0189 & DRAGON & 0.0393 & 0.0086 & 0.0215 & 0.0264 \\
          & GRCN + MG & 0.0286 & 0.0064 & 0.0154 & 0.0191 & DRAGON + MG & 0.0408 & 0.0090 & 0.0216 & 0.0269 \\
          & \cellcolor[rgb]{ .851,  .851,  .851}Improv. & \cellcolor[rgb]{ .851,  .851,  .851}1.06\% & \cellcolor[rgb]{ .851,  .851,  .851}1.59\% & \cellcolor[rgb]{ .851,  .851,  .851}1.32\% & \cellcolor[rgb]{ .851,  .851,  .851}1.06\% & \cellcolor[rgb]{ .851,  .851,  .851}Improv. & \cellcolor[rgb]{ .851,  .851,  .851}3.82\% & \cellcolor[rgb]{ .851,  .851,  .851}4.65\% & \cellcolor[rgb]{ .851,  .851,  .851}0.47\% & \cellcolor[rgb]{ .851,  .851,  .851}1.89\% \\
    \bottomrule
    \end{tabular}%
    }
  \caption{Top-5 recommendation performance of GRCN and DRAGON with or without MG on Baby. "Improv." indicates the relative enhancement of MG compared to the baselines. }
  \label{tab:optimizer}%
\end{table*}%

\noindent\textbf{Visualization of Local Minima (RQ3). }In order to affirm that MG can help the model approach flat local minima, we visualize the training loss landscapes with and without MG of latest models BM3 and FREEDOM which is more challenge to improve than previous models. 
Following the methodology outlined by \citet{li2018visualizing}, we record the trained model parameters as $p$. Subsequently, we sample 20 values each for $m$ and $n$ at equal intervals from the range $[-100, 100]$, and init noise $n_1, n_2$ from the standard normal distribution. $n_1, n_2$ have the same shape as $p$. We then update the model's parameters to $(p + mn_1 + nn_2)$ and calculate the corresponding loss values, simulating the shift of the loss landscape as depicted in Fig. \ref{fig:intro}. By employing this methodology, we can generate training loss landscapes as illustrated in Fig.~\ref{fig:landscape}. The flatter the training loss landscape, the flatter the local minima to which the current model converges.
Notably, the landscapes associated with MG demonstrate a flatter topography when compared to those of the baseline approaches and prominently encompass an area characterized by low loss (depicted in blue). These visualization results show that MG can make multimodal recommender system approach flatter minima.
Moreover, from the visualization results, the loss landscape associated with FREEDOM appears flatter compared to that of BM3. This indicates that FREEDOM may be more robust to noise, aligning with the observations in Table~\ref{tab:noise}.

\begin{figure}[t]
  \centering
 \includegraphics[width=\linewidth]{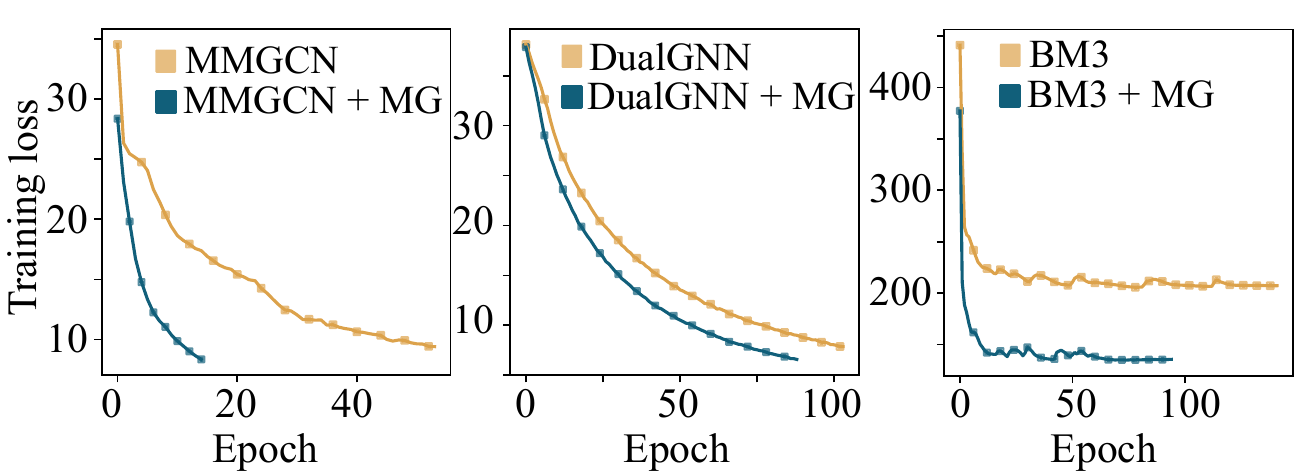}
  \caption{Convergence of MG on the dataset Baby.}
\label{fig:speed}
\end{figure}

\noindent\textbf{Convergence Speed of MG (RQ4). }
In this part, we visualize the training loss of both widely used and leading-stage models to confirm MG's superior convergence.
Following the training configuration outlined by \citet{zhou2023comprehensive}, we set the maximum number of epochs to 1000 while implementing an early stopping strategy. 
Subsequently, we visualize the progression of the loss value for multiple multimodal recommendation models before and after the application of MG, as shown in Fig. \ref{fig:speed}. Noticeably, MG often leads the models to meet the termination criteria with fewer training iterations and achieve a smaller final training loss value.

\noindent\textbf{Comparing with Flat Local Minima Methods (RQ5). }
Recently, several general smoothing methods ~\cite{he2019asymmetric, shi2021overcoming,zhao2022penalizing, du2021efficient, zhuang2022surrogate} about flat local minima have been proposed. In this section, we compare these methods with MG in multimodal recommender systems. Specifically, we apply the recent method SSAM~\cite{mi2022make} to the multimodal recommendation model DRAGON on the dataset Baby, and the results are laid out in Table \ref{tab:ssam}. The experimental results show that MG outperforms both SSAM-F, which draws upon fisher information, and SSAM-D, which capitalizes on the principles of dynamic sparse training mask~\cite{mi2022make}. These experiments suggest the MG's advantage of identifying flat local minima compared with other minimization methods in the scenario of multimodal recommender systems.

\noindent\textbf{Compatibility with Various Optimizers (RQ6). }
As a gradient method, it is necessary for MG to adapt to various optimizers. Therefore, in this part, we evaluate the performance of MG on the dataset Baby under various optimizers and baselines as illustrated in Table \ref{tab:optimizer}. The  optimizers include Adam~\cite{kingma2014adam}, SGD~\cite{sutskever2013importance, bottou2012stochastic}, RMSprop~\cite{graves2013generating}, and Adagrad~\cite{duchi2011adaptive}. 
Here, we choose the classic multimodal recommender system GRCN and the latest state-of-the-art multimodal recommender system DRAGON as baselines. 
Despite the considerable performance fluctuations observed among baselines under different optimizers, MG consistently delivers a noticeable enhancement in recommendation accuracy. This consistency highlights the stable performance of MG across a range of optimizers.

\begin{table}[t]
  \centering
    \resizebox*{0.7\linewidth}{!}{
    \begin{tabular}{lcccc}
    \toprule
    Method & REC   & PREC  & MAP   & NDCG \\
    \midrule
    -     & 0.0374 & 0.0082 & 0.0202 & 0.0249 \\
    SSAM-F & 0.0397 & 0.0088 & 0.0217 & 0.0267 \\
    SSAM-D & 0.0382 & 0.0084 & 0.0197 & 0.0248 \\
    \rowcolor[rgb]{ .851,  .851,  .851} MG    & 0.0419 & 0.0092 & 0.0219 & 0.0273 \\
    \bottomrule
    \end{tabular}%
    }
  \caption{Top-5 recommendation performance of DRAGON with  flat local minima method SSAM and MG on Baby. SSAM-F draws upon fisher information and SSAM-D capitalizes on the principles of dynamic sparse training mask. }
  \label{tab:ssam}%
  \vspace{-0.6cm}
\end{table}%

\begin{table}[t]
  \centering
    \resizebox*{0.55\linewidth}{!}{
    \begin{tabular}{lcc}
    \toprule
    Model & HR    & NDCG \\
    \midrule
    VBPR  & 0.1352 & 0.1005 \\
    VBPR + AMR & 0.1395 & 0.1027 \\
    \rowcolor[rgb]{ .851,  .851,  .851} VBPR + AMR + MG & 0.1457 & 0.1047 \\
    \bottomrule
    \end{tabular}%
    }
  \caption{Top-5 recommendation performance of VBPR with adversarial training method AMR and MG on Pinterest. }
  \label{tab:adversarial}%
  \vspace{-1cm}
\end{table}%

\noindent\textbf{Compatibility with Other Robust Recommenderation Methods (RQ7). }
Researchers always enhance the robustness of multimodal recommender systems in the way of adversarial training~\cite{chen2019adversarial, du2018enhancing, li2020adversarial, tang2019adversarial}. Therefore, in this part, we perform experiments using the official code of adversarial training method AMR \cite{tang2019adversarial} to explore the performance of multimodal recommender systems training with both MG and the robust recommendation method. As presented in Table~\ref{tab:adversarial}, the experimental results show that the performance of VBPR with AMR can be improved by MG, indicating that MG is compatible with adversarial training aiming at enhancing the robustness of multimodal recommender systems.

\begin{table}[t]
  \centering
    \resizebox*{\linewidth}{!}{
    \begin{tabular}{llcccc}
    \toprule
    Dataset & Metric & LayerGCN & \multicolumn{1}{p{4.57em}}{\centering LayerGCN\newline{} + MG} & SelfCF & \multicolumn{1}{p{4.855em}}{\centering SelfCF\newline{} + MG} \\
    \midrule
    \multirow{4}[2]{*}{Baby} & REC   & 0.0314 & \cellcolor[rgb]{ .851,  .851,  .851}0.0337 & 0.0329 & \cellcolor[rgb]{ .851,  .851,  .851}0.0348 \\
          & PREC  & 0.0070 & \cellcolor[rgb]{ .851,  .851,  .851}0.0075 & 0.0073 & \cellcolor[rgb]{ .851,  .851,  .851}0.0078 \\
          & MAP   & 0.0168 & \cellcolor[rgb]{ .851,  .851,  .851}0.0179 & 0.0175 & \cellcolor[rgb]{ .851,  .851,  .851}0.0182 \\
          & NDCG  & 0.0209 & \cellcolor[rgb]{ .851,  .851,  .851}0.0223 & 0.0217 & \cellcolor[rgb]{ .851,  .851,  .851}0.0228 \\
    \midrule
    \multirow{4}[2]{*}{Clothing} & REC   & 0.0242 & \cellcolor[rgb]{ .851,  .851,  .851}0.0258 & 0.0246 & \cellcolor[rgb]{ .851,  .851,  .851}0.0284 \\
          & PREC  & 0.0051 & \cellcolor[rgb]{ .851,  .851,  .851}0.0054 & 0.0052 & \cellcolor[rgb]{ .851,  .851,  .851}0.0059 \\
          & MAP   & 0.0136 & \cellcolor[rgb]{ .851,  .851,  .851}0.0141 & 0.0136 & \cellcolor[rgb]{ .851,  .851,  .851}0.0155 \\
          & NDCG  & 0.0163 & \cellcolor[rgb]{ .851,  .851,  .851}0.0171 & 0.0164 & \cellcolor[rgb]{ .851,  .851,  .851}0.0188 \\
    \bottomrule
    \end{tabular}%
    }
  \caption{Top-5 recommendation performance of general models with or without MG. }
  \label{tab:general}%
\end{table}%

\noindent\textbf{MG for Non-multimodal Systems (RQ8). } The non-multimodal system is a special case of multimodal settings. In this part, we validate the effectiveness of MG across various classical non-multimodal recommender systems, like LightGCN \cite{zhou2023layer} and SelfCF \cite{zhou2023selfcf}. Although they are subjected to a relatively lower risk of input distribution shift, as indicated in Table \ref{tab:general}, our proposed MG also brings a noticeable improvement in the performance of these general models, showing that even in a single-modal context, MG remains effective in identifying superior flat local minima and promoting model robustness.

\section{Ablation Study}
\label{sec:abl}

\noindent\textbf{The Interval $\beta$ in Algorithm~\ref{alg:mg}.} 
$\beta$ denotes the interval of mirror training, where MG affects the training of multimodal recommendation models every $\beta$ iterations. As depicted in Table \ref{tab:beta}, we set $\beta$ to 1, 3, 5, and 7, and observe the performance of various multimodal recommendation models. Interestingly, we notice that changes in $\beta$ do not lead to a significant impact on model performance, with 3 being a suitable choice for the value of $\beta$.

\begin{table}[t]
  \centering
    \resizebox*{0.9\linewidth}{!}{
    \begin{tabular}{llcccc}
    \toprule
    Model & Metric & $\beta$ = 1     & $\beta$ = 3     & $\beta$ = 5     & $\beta$ = 7 \\
    \midrule
    \multirow{2}[2]{*}{DualGNN} & REC   & 0.0327 & 0.0329 & \textbf{0.0331} & 0.0327 \\
          & NDCG  & 0.0218 & \textbf{0.0219} & 0.0217 & 0.0216 \\
    \midrule
    \multirow{2}[2]{*}{BM3} & REC   & 0.0339 & \textbf{0.0345} & 0.0337 & 0.0342 \\
          & NDCG  & 0.0223 & \textbf{0.0228} & 0.0223 & 0.0225 \\
    \midrule
    \multirow{2}[2]{*}{DRAGON} & REC   & 0.0418 & \textbf{0.0419} & 0.0410 & 0.0416 \\
          & NDCG  & \textbf{0.0274} & 0.0273 & 0.0273 & \textbf{0.0274} \\
    \bottomrule
    \end{tabular}%
    }
  \caption{Top-5 recommendation performance of models with different MG interval ($\beta$) on Baby. }
  \label{tab:beta}%
  \vspace{-0.5cm}
\end{table}%

\begin{figure}[t]
  \centering
 \includegraphics[width=\linewidth]{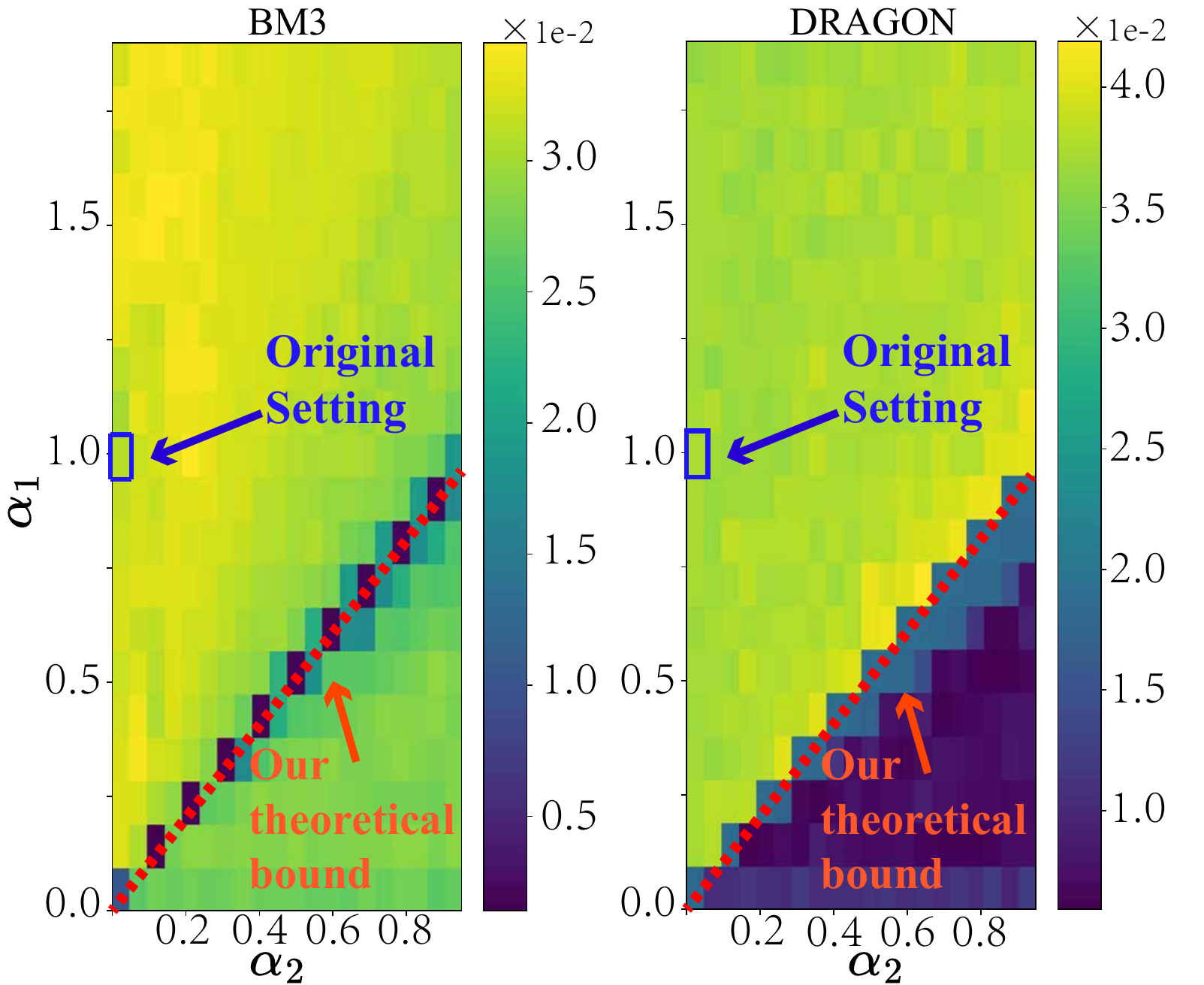}
  \caption{Recall for different  $(\alpha_1, \alpha_2)$ on Baby.}
\label{fig:a1a2}
\vspace{-0.3cm}
\end{figure}

\noindent\textbf{The $(\alpha_1, \alpha_2)$ in Algorithm~\ref{alg:mg}}. $\alpha_1$ and $\alpha_2$ are the scaling coefficients while using Mirror Training. As shown in Fig. \ref{fig:a1a2}, when $\alpha_1 > \alpha_2$, MG can achieve significant performance improvement by setting appropriate $(\alpha_1, \alpha_2)$. From our theoretical results in Eq. (\ref{eq:loss_final}), when $\alpha_1 = \alpha_2$ (red dashed line in Fig. \ref{fig:a1a2}), the objective function degenerates to only the regularization term, leading to optimization difficulties. On the other hand, when $\alpha_1 < \alpha_2$ (below the red dashed line), the main term becomes a pure gradient ascent term, which might not necessarily be advantageous for optimization and potentially impact the model's performance. These observations align well with our theoretical results.


\section{Limitations}

The selection of $(\alpha_1, \alpha_2)$ lacks a direct and efficient method, often requiring grid search, similar to other advanced recommender system approaches~\cite{zhou2023enhancing,zhou2023bootstrap,zhou2022tale}. Moreover, as per Algorithm~\ref{alg:mg}, MG theoretically requires more time for iterations. Fortunately, as indicated by the results in Fig. \ref{fig:speed}, MG exhibits rapid convergence speed, making the additional computational cost acceptable. In the future, further development of MG is needed to address these limitations.

\section{Conclusion}

In this paper, we rethink the robustness challenges in multimodal recommender systems through the perspective of flat local minima. We propose a novel gradient strategy called Mirror Gradient (MG) to address the prevalent risk of input distribution shift in the systems. Extensive experiments across diverse multimodal recommendation models and benchmark datasets and strong theoretical evidence demonstrate the effectiveness, compatibility, and versatility of MG.

\begin{acks}
This work was supported in part by National Key R\&D Program of China under Grant No.2021ZD0111601, National Natural Science Foundation of China (NSFC) under Grant No.62206314, 61836012, 62325605, and U1711264, GuangDong Basic and Applied Basic Research Foundation under Grant No.2022A1515011835, China Postdoctoral Science Foundation funded project under Grant No. 2021M703687.
\end{acks}

\bibliographystyle{ACM-Reference-Format}
\balance
\bibliography{sample-base}

\onecolumn
\appendix

\section{Related Works}

\noindent\textbf{Multimodal Recommender Systems. }Traditional recommender systems \cite{he2020lightgcn, zhang2022diffusion, zhou2023layer} model the interaction between users and items, relying on extensive user-item interaction data to ensure accurate recommendations. In the presence of diverse multimodal information, multimodal recommender systems~\cite{ni2022two,malitesta2023disentangling,boehm2022harnessing} leverage supplementary multimodal information to complement historical user-item interactions, mitigating challenges like data sparsity \cite{zhou2016evaluating} and cold start \cite{schein2002methods, lam2008addressing} in the recommendation. Early researchers often employed collaborative filter~\cite{wang2015collaborative,zhang2016collaborative} or matrix factorization \cite{he2016vbpr} for multimodal recommendation modeling. Recently, many works \cite{zhou2022tale, wang2021dualgnn} employ graph neural networks for multimodal recommender systems, with self-supervised learning \cite{zhou2023bootstrap, tao2022self} also gaining traction in this domain.

\noindent\textbf{Robust Recommender Systems. }Recent studies \cite{deldjoo2021survey, wu2021fight} have shed light on the vulnerability of recommender systems, highlighting how disturbances introduced by noise can significantly undermine the accuracy of recommendations. In pursuit of bolstering the robustness of recommender systems, a multitude of efforts \cite{chen2019adversarial, du2018enhancing, li2020adversarial, tang2019adversarial} have focused on adversarial training. This approach, which operates under the premise that each instance may serve as a potential target for attacks \cite{deldjoo2021survey}, introduces controlled perturbations to either the input data or model parameters to enhance robustness. However, most existing studies have overlooked the potential risks arising from information adjustment in multimodal recommender systems.

\noindent\textbf{Flat Local Minima. } Flat local minima have been consistently linked to the favorable generalization capabilities of deep neural networks \cite{hochreiter1994simplifying, hochreiter1997flat, he2019asymmetric, shi2021overcoming, huang2021altersgd}. In the wake of this insight, numerous researchers have surfaced, exemplified by works such as \citet{zhao2022penalizing, du2021efficient, zhuang2022surrogate}, which strive to enhance model performance through exploring flat local minima. Specifically, \citet{foret2020sharpness} introduces a novel approach named Sharpness-Aware Optimization (SAM), wherein the optimization process hinges on addressing a mini-max problem to achieve an optimal sharpness value. \citet{kwon2021asam}, on the other hand, proposes a scale-invariant variant of SAM, named ASAM, bolstered by an adaptive radius mechanism aimed at augmenting training stability. Moreover, \citet{mi2022make} innovatively delve into the realm of sparse perturbation with their SSAM (Sparse Sharpness-Aware Minima) approach, strategically focusing on perturbations that encapsulate the most critical yet sparse dimensions of the problem space.

\section{Supplementary Experiments}

Throughout all the experiments in this paper, we adopted the Top-5 recommendation performance as the metric due to its capacity to reflect user preferences better, given the heightened significance of recommendations in top positions of rank lists~\cite{tang2018ranking}. However, enhancing Top-5 performance is a challenging job on Amazon datasets. The current advanced multimodal recommender system methods, including matrix factorization (VBPR \cite{he2016vbpr}), self-supervised learning (SLMRec \cite{tao2022self}, BM3 \cite{zhou2023bootstrap}), graph neural networks (MMGCN \cite{wei2019mmgcn}, GRCN \cite{wei2020graph}, DualGNN \cite{wang2021dualgnn}, FREEDOM \cite{zhou2022tale}, DRAGON \cite{zhou2023enhancing}), still achieve relatively low value on Top-5 recommendation performance.

In fact, despite the challenge in enhancing this metric, we believe that the improvement brought about by the proposed MG in this paper is substantial and significant. Specifically, 
in Table \ref{tab:bmseed} and Table \ref{tab:freedomseed}, we present the performance evaluation of the BM3 and FREEDOM methods on the dataset Baby under various random seeds. Compared to the performance of the original method under different random seeds, the improvement achieved by MG is notable. This implies that the enhancements from MG are substantial and not obtained through random perturbations or simple parameter-tuning.

\begin{table}[htbp]
  \centering
    \begin{tabular}{p{5.355em}cccccccc}
    \toprule
    \multicolumn{1}{l}{BM3} & REC   & Improv. & PREC  & Improv. & MAP   & Improv. & NDCG  & Improv. \\
    \midrule
    \multicolumn{1}{l}{Origin} & 0.0327 & 0.00\% & 0.0072 & 0.00\% & 0.0174 & 0.00\% & 0.0216 & 0.00\% \\
    \midrule
    \multirow{9}[2]{*}{Random seed} & 0.0329 & 0.61\% & 0.0073 & 1.39\% & 0.0176 & 1.15\% & 0.0218 & 0.93\% \\
    \multicolumn{1}{l}{} & 0.0329 & 0.61\% & 0.0073 & 1.39\% & 0.0176 & 1.15\% & 0.0218 & 0.93\% \\
    \multicolumn{1}{l}{} & 0.0329 & 0.61\% & 0.0073 & 1.39\% & 0.0173 & -0.57\% & 0.0216 & 0.00\% \\
    \multicolumn{1}{l}{} & 0.0323 & -1.22\% & 0.0072 & 0.00\% & 0.0175 & 0.57\% & 0.0216 & 0.00\% \\
    \multicolumn{1}{l}{} & 0.0323 & -1.22\% & 0.0072 & 0.00\% & 0.0174 & 0.00\% & 0.0216 & 0.00\% \\
    \multicolumn{1}{l}{} & 0.0322 & -1.53\% & 0.0072 & 0.00\% & 0.0174 & 0.00\% & 0.0215 & -0.46\% \\
    \multicolumn{1}{l}{} & 0.0322 & -1.53\% & 0.0072 & 0.00\% & 0.0174 & 0.00\% & 0.0215 & -0.46\% \\
    \multicolumn{1}{l}{} & 0.0321 & -1.83\% & 0.0072 & 0.00\% & 0.0174 & 0.00\% & 0.0215 & -0.46\% \\
    \multicolumn{1}{l}{} & 0.0321 & -1.83\% & 0.0072 & 0.00\% & 0.0174 & 0.00\% & 0.0215 & -0.46\% \\
    \midrule
    MG    & 0.0345 & 5.50\% & 0.0077 & 6.94\% & 0.0183 & 5.17\% & 0.0228 & 5.56\% \\
    \bottomrule
    \end{tabular}%
  \caption{Top-5 recommendation performance of BM3 on the dataset Baby. "Random seed" denotes that the model is trained with a random seed. }
  \label{tab:bmseed}%
\end{table}%

\begin{table}[htbp]
  \centering
    \begin{tabular}{p{5.355em}cccccccc}
    \toprule
    \multicolumn{1}{l}{FREEDOM} & REC   & Improv. & PREC  & Improv. & MAP   & Improv. & NDCG  & Improv. \\
    \midrule
    \multicolumn{1}{l}{Origin} & 0.0374 & 0.00\% & 0.0083 & 0.00\% & 0.0194 & 0.00\% & 0.0243 & 0.00\% \\
    \midrule
    \multirow{9}[2]{*}{Random seed} & 0.0376 & 0.53\% & 0.0083 & 0.00\% & 0.0193 & -0.52\% & 0.0243 & 0.00\% \\
    \multicolumn{1}{l}{} & 0.0376 & 0.53\% & 0.0082 & -1.20\% & 0.0195 & 0.52\% & 0.0246 & 1.23\% \\
    \multicolumn{1}{l}{} & 0.0374 & 0.00\% & 0.0082 & -1.20\% & 0.0194 & 0.00\% & 0.0245 & 0.82\% \\
    \multicolumn{1}{l}{} & 0.0374 & 0.00\% & 0.0083 & 0.00\% & 0.0195 & 0.52\% & 0.0246 & 1.23\% \\
    \multicolumn{1}{l}{} & 0.0373 & -0.27\% & 0.0082 & -1.20\% & 0.0196 & 1.03\% & 0.0244 & 0.41\% \\
    \multicolumn{1}{l}{} & 0.0373 & -0.27\% & 0.0082 & -1.20\% & 0.0196 & 1.03\% & 0.0245 & 0.82\% \\
    \multicolumn{1}{l}{} & 0.0373 & -0.27\% & 0.0082 & -1.20\% & 0.0196 & 1.03\% & 0.0245 & 0.82\% \\
    \multicolumn{1}{l}{} & 0.0373 & -0.27\% & 0.0082 & -1.20\% & 0.0196 & 1.03\% & 0.0245 & 0.82\% \\
    \multicolumn{1}{l}{} & 0.0372 & -0.53\% & 0.0082 & -1.20\% & 0.0196 & 1.03\% & 0.0245 & 0.82\% \\
    \midrule
    MG    & 0.0397 & 6.15\% & 0.0088 & 6.02\% & 0.0209 & 7.73\% & 0.0261 & 7.41\% \\
    \bottomrule
    \end{tabular}%
  \caption{Top-5 recommendation performance of FREEDOM on the dataset Baby. "Random seed" denotes that the model is trained with a random seed. }
  \label{tab:freedomseed}%
\end{table}%

\end{document}